\newcommand{\cM}{{\mathcal{M}}}
\def\bu{\bar{u}}
\newcommand{\sff} {\mathsf{f}}
\newtheorem{theorem}{Theorem}
\newtheorem{definition}{Definition}
\newtheorem{corollary}{Corollary}
\newtheorem{lemma}{Lemma}
\newtheorem{example}{Example}
\author{Enrico Paolini, Mark F. Flanagan, Marco Chiani and Marc P. C. Fossorier
\thanks{E. Paolini and M. Chiani are with DEIS/WiLAB, University of Bologna, via Venezia 52, 47023 Cesena (FC), Cesena, Italy (e-mail: e.paolini@unibo.it, marco.chiani@unibo.it).}%
\thanks{M. F. Flanagan is with the School of Electrical, Electronic and Mechanical Engineering, University College Dublin, Belfield, Dublin 4, Ireland (e-mail: mark.flanagan@ieee.org).}%
\thanks{M. P. C. Fossorier is with ETIS ENSEA, UCP, CNRS UMR-8051, 6 avenue du Ponceau, 95014 Cergy Pontoise, France (e-mail:  mfossorier@ieee.org).}
}
\begin{document}


\title{Spectral Shape of Check-Hybrid GLDPC Codes}

\maketitle
\thispagestyle{empty}
\pagestyle{empty}

\begin{abstract}
This paper analyzes the asymptotic exponent of both the weight spectrum and the stopping set size spectrum for a class of generalized low-density parity-check (GLDPC) codes. Specifically, all variable nodes (VNs) are assumed to have the same degree (regular VN set), while the check node (CN) set is assumed to be composed of a mixture of different linear block codes (hybrid CN set). A simple expression for the exponent (which is also referred to as the \emph{growth rate} or the \emph{spectral shape}) is developed. This expression is consistent with previous results, including the case where the normalized weight or stopping set size tends to zero. Furthermore, it is shown how certain symmetry properties of the local weight distribution at the CNs induce a symmetry in the overall weight spectral shape function.
\end{abstract}

\section{Introduction}
Tanner codes were introduced in \cite{Tanner_GLDPC} as a generalization of Gallager's low-density parity-check (LDPC) codes \cite{Gallager}. In the bipartite graph representation of a Tanner code, all variable nodes (VNs) have the same degree and may be interpreted as repetition codes with the same length (regular VN set). Moreover, all check nodes (CNs) are generic linear block codes with the same length, dimension and code book (regular CN set). Tanner codes with an \emph{irregular} VN set (i.e., VNs with different degrees) or with a \emph{hybrid} CN set (i.e., CNs of different types) are more generically referred to as generalized LDPC (GLDPC) codes in the literature (e.g. \cite{Liva2008:Tanner}). Note that an LDPC code may be viewed as a GLDPC code where all CNs are single parity-check (SPC) codes. An even more general class of codes is represented by doubly-generalized LDPC (D-GLDPC) codes \cite{Wang_Fossorier_DG_LDPC}, where the VNs are also allowed to be generic linear block codes.

In this paper, a simple formula for the asymptotic (in codeword length) exponent of the weight distribution of a GLDPC code ensemble with a regular VN set and a hybrid CN set is developed. As usual in the literature, this exponent will be referred to as the \emph{growth rate of the weight distribution} or the \emph{weight spectral shape} of the ensemble, the two expressions being used interchangeably throughout the paper. The starting point for deriving the above-mentioned formula is a polynomial system solution for the spectral shape that was developed by the authors in \cite[Theorem~1]{Flanagan09:weight}. Here, it was shown that any value of the spectral shape function of an irregular D-GLDPC code ensemble can be calculated by solving a $(4 \times 4)$ system of polynomial equations, regardless of the number of different VN and CN~types. Additional recent works relevant to the subject of this paper are \cite{abusurra2007:enumerators,declercq2008:nonbinary,Kasai2009:multiedge,lentmaier2009:exact,wang2009:multiedge}.

As explained in Section~\ref{section:preliminaries}, assuming transmission over the binary erasure channel (BEC) and iterative decoding, the developed formula is also valid for the asymptotic exponent of the stopping set size distribution upon replacing the local weight enumerating function (WEF) of each CN type with an appropriate polynomial function.

Symmetry properties of the growth rate of the weight distribution are also investigated. It is proved that the weight spectral shape function of a variable-regular GLDPC ensemble is symmetric w.r.t. normalized weight $\alpha = 1/2$ if the local WEF of each CN is a symmetric polynomial. This result establishes a connection between symmetry properties at a ``microscopic'' level (i.e., at the nodes of the Tanner graph) and symmetry of the ``macroscopic'' growth rate function. A necessary condition for symmetry of the weight spectral shape is also developed.

\section{Preliminary Definitions}\label{section:preliminaries}
We consider a check-hybrid GLDPC code ensemble $\cM_n$, where $n$ is the codeword length (this is
equal to the number of VNs). All VNs are repetition codes of length $q \geq 2$, with input-output
weight enumerating function   
\begin{align}\label{eq:B_repetition}
B(x,y)=1+xy^q \, .
\end{align} 
There are $n_c$ different CN types $t \in I_c = \{ 1,2,\cdots, n_c\}$; for each CN type $t \in I_c$,
we denote by $h_t$, $s_t$ and $r_t$ the CN dimension, length and minimum distance, respectively.
We assume that $r_t \ge 2$ for all $t \in I_c$, and that no CN has idle bits (i.e., its generator
matrix contains no all-zero column). The WEF for CN type $t \in I_c$ is given by 
\begin{eqnarray*}
A^{(t)}(z) & = & \sum_{u=0}^{s_t} A_u^{(t)} z^u = 1 + \sum_{u=r_t}^{s_t} A_u^{(t)} z^u \; .
\end{eqnarray*}
Here $A_u^{(t)} \ge 0$ denotes the number of weight-$u$ codewords for CNs of type $t$. We denote by $\bar{u}_t$ the largest $u \in \{r_t,r_t+1,\dots,s_t\}$ such that $A^{(t)}_u > 0$. 

For $t \in I_c$, $\rho_t$ is the fraction of edges of the Tanner graph connected to type-$t$ CNs,
and the polynomial $\rho(x)$ is defined~by
\[
\rho(x) = \sum_{t\in I_c} \rho_t x^{s_t - 1} \; .
\]
If $E$ denotes the number of edges in the Tanner graph, the number of CNs of type $t\in I_c$ is then
given by $E \rho_t / s_t$. Denoting as usual $\int_0^1 \rho(x) \, {\rm d} x$ by $\int\! \rho$, we
see that the number of edges in the Tanner graph is given by $E = n q$ and the number of CNs is
given by $m = E \int\! \rho$. Therefore, the fraction of CNs of type $t \in I_c$ is given by
\begin{equation}
\gamma_t = \frac{\rho_t}{s_t \int\! \rho} \; .
\label{eq:gamma_t_definition}
\end{equation} 
A member of the GLDPC code ensemble $\cM_n$ corresponds to a permutation of the $E$ edges connecting VNs to CNs.

In the special case where there is only \emph{one} CN type (Tanner code ensemble), we write the WEF for this CN type as $A(z) = \sum_{u=0}^{s} A_u z^u = 1 + \sum_{u=r}^{s} A_u z^u$. In this case, the largest $u \in \{r,r+1,\dots,s\}$ such that $A_u > 0$ is denoted by $\bar{u}$. 

The weight spectral shape of a GLDPC code ensemble sequence $\{ \cM_n \}$ is defined by 
\begin{equation}
G(\alpha) \triangleq \lim_{n\rightarrow \infty} \frac{1}{n} \log \mathbb{E}_{\cM_n} \left[ N_{\alpha n} \right]
\label{eq:growth_rate_result}
\end{equation}
where $\mathbb{E}_{\cM_n}$ denotes the expectation operator over the ensemble $\cM_n$, $N_{w}$
denotes the number of codewords of weight $w$ of a randomly chosen GLDPC code in the ensemble and
the logarithm has base $e$. The limit in (\ref{eq:growth_rate_result}) assumes the inclusion of only
those positive integers $n$ for which $\alpha n \in \mathbb{Z}$ and $\mathbb{E}_{\cM_n} [ N_{\alpha
n} ]$ is positive. Using standard notation, we also define the asymptotic relative minimum distance
for $\cM_n$as $\alpha^*=\inf \{ \alpha > 0 \; | \; G(\alpha)\geq 0 \}$. The ensemble sequence is
said to exhibit \emph{good spectral shape behavior} when $\alpha^* > 0$ and \emph{bad spectral
shape behavior} when $\alpha^* = 0$.

Although this paper is focused on the weight spectrum, the results developed in Section \ref{section:spectral_shape_formula} can be extended to the stopping set size spectrum. A stopping set of a GLDPC code may be defined as any subset $\mathcal{S}$ of the VNs such that, assuming all VNs in $\mathcal{S}$ are erased and all VNs not in $\mathcal{S}$ are not erased, every CN which is connected to $\mathcal{S}$ cannot recover any VN in $\mathcal{S}$.\footnote{The concept of stopping set was first introduced in \cite{di02:finite} in the context of LDPC codes. When applied to LDPC codes (i.e., all CNs are SPC codes), the definition of stopping set used in this paper coincides with that in \cite{di02:finite}.} A \emph{local stopping set} for a CN is a subset of the local code bits which, if erased, is not recoverable to any extent by the CN. All results derived in this paper for the distance spectrum can be extended to the stopping set size spectrum by simply replacing the WEF for CN type $t \in I_c$ with its local stopping set enumerating function (SSEF).

We point out that the local SSEF of a CN depends on the decoding algorithm used to locally recover from erasures. In this paper, we will consider both bounded distance (BD) and maximum \emph{a posteriori} (MAP) CN decoding. In the former case, the local SSEF (BD-SSEF) is given by 
\begin{align}\label{eq:Psi_def}
\Psi^{(t)}(z) = 1 + \sum_{u=r_t}^{s_t} {s_t \choose u} z^u \; .
\end{align}
In the latter case, the local SSEF (MAP-SSEF) is given by
\begin{align}\label{eq:Phi_def}
\Phi^{(t)}(z) = 1 + \sum_{u=r_t}^{s_t} \phi_u^{(t)}\, z^u
\end{align}
where $\phi^{(t)}_u \geq 0$ is the number of local stopping sets (under MAP decoding) of size $u$.\footnote{Denoting by $\mathbf{G}_t$ any generator matrix for a type-$t$ CN, a local erasure pattern is a local stopping set under MAP decoding when each column of $\mathbf{G}_t$ corresponding to erased bits is linearly independent of the columns of $\mathbf{G}_t$ corresponding to the non-erased bits.}

The growth rate of the stopping set size distribution of the ensemble sequence $\{ \cM_n \}$ for the
case of BD and MAP decoding at the CNs, whose definition is analogous to
\eqref{eq:growth_rate_result}, will be denoted by $G_{\Psi}(\alpha)$ and $G_{\Phi}(\alpha)$,
respectively. Similarly the asymptotic relative minimum stopping set size will be denoted by
$\alpha_{\Psi}^*$ and $\alpha_{\Phi}^*$, respectively.

\begin{definition} Let 
\begin{equation}
M \triangleq \left( \int\! \rho \right) \sum_{t \in I_c} \gamma_t \bar{u}_t \leq 1
\label{eq:M_definition}
\end{equation}
and define the function $\sff: \mathbb{R}^+ \rightarrow [0,M)$ as
\begin{align}\label{eq:f}
\sff(z) = \left( \int\! \rho \right) \sum_{t \in I_c} \gamma_t \frac{ z \, \frac{\mathrm{d} A^{(t)}(z)}{\mathrm{d} z}}{A^{(t)}(z)} \, .
\end{align}
\end{definition}

Note that we have $M=1$ if and only if $\bu_t=s_t$ for all $t \in I_c$.
\begin{lemma} 
The function $\sff$ fulfills the following properties:
\begin{enumerate}
\item It is monotonically increasing for all $z>0$;
\item $\sff(0) = \sff'(0) = 0$;
\item $\lim_{z\rightarrow +\infty} \sff(z) = M$.
\end{enumerate}
\end{lemma}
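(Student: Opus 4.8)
The plan is to prove each of the three properties of $\sff$ separately, reducing everything to a statement about a single CN type and then summing with the nonnegative weights $(\int\!\rho)\gamma_t$. The key observation is that $\sff$ is a nonnegative linear combination of the functions
\[
g_t(z) \triangleq \frac{z\,\frac{\mathrm{d}A^{(t)}(z)}{\mathrm{d}z}}{A^{(t)}(z)} = \frac{\sum_{u=0}^{s_t} u\, A_u^{(t)} z^u}{\sum_{u=0}^{s_t} A_u^{(t)} z^u} \, ,
\]
so it suffices to establish the monotonicity, the behaviour at $0$, and the limit at $+\infty$ for each $g_t$; the corresponding properties of $\sff$ then follow immediately, with the limit picking up the factor $(\int\!\rho)\sum_t \gamma_t \bar u_t = M$.

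For monotonicity (Property 1), I would recognize $g_t(z)$ as the expectation $\mathbb{E}_{P_z}[U]$ of the random variable $U$ taking value $u$ with probability proportional to $A_u^{(t)} z^u$, i.e.\ the mean of an exponential-family (tilted) distribution with natural parameter $\log z$. A direct computation gives $z\, g_t'(z) = \operatorname{Var}_{P_z}(U) \ge 0$, and since the support of $U$ contains at least the two points $0$ and $\bar u_t \ge r_t \ge 2$ (because $A_0^{(t)}=1$ and $A_{\bar u_t}^{(t)}>0$), the variance is strictly positive for all $z>0$; hence $g_t$ is strictly increasing on $(0,\infty)$. Alternatively one can just expand $A^{(t)}(z)^2 g_t'(z)$ as a polynomial and check its coefficients are nonnegative via the identity $\sum_{u<v}(v-u)(A_u^{(t)}A_v^{(t)})(\cdots)$, but the variance argument is cleaner.

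For Property 2, I would use the expansion $A^{(t)}(z) = 1 + A_{r_t}^{(t)} z^{r_t} + O(z^{r_t+1})$ with $r_t \ge 2$, so that $z\,\frac{\mathrm{d}A^{(t)}}{\mathrm{d}z} = r_t A_{r_t}^{(t)} z^{r_t} + O(z^{r_t+1})$ and therefore $g_t(z) = r_t A_{r_t}^{(t)} z^{r_t} + O(z^{r_t+1})$; since $r_t \ge 2$, both $g_t(0)=0$ and $g_t'(0)=0$, and summing gives $\sff(0)=\sff'(0)=0$. For Property 3, I would factor the dominant term $A_{\bar u_t}^{(t)} z^{\bar u_t}$ out of both numerator and denominator of $g_t(z)$; as $z\to+\infty$ every other term is lower order, so $g_t(z) \to \bar u_t$, and hence
\[
\lim_{z\to+\infty}\sff(z) = \left(\int\!\rho\right)\sum_{t\in I_c}\gamma_t \bar u_t = M \, .
\]
Finally I would note that Property 1 together with Properties 2 and 3 shows $\sff$ maps $\mathbb{R}^+$ into $[0,M)$ as claimed in the definition (the value $M$ is approached but never attained). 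I do not expect any real obstacle here; the only mildly delicate point is justifying strict (rather than weak) monotonicity, which is exactly where one must invoke $r_t \ge 2$ and the presence of at least two distinct exponents in $A^{(t)}(z)$ — assumptions that were explicitly built into the model in Section~\ref{section:preliminaries}.
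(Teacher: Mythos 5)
Your proof is correct. For Property 1 you take what looks like a different route --- interpreting $g_t(z)$ as the mean $\mathbb{E}_{P_z}[U]$ of the tilted distribution $P_z(u)\propto A^{(t)}_u z^u$ and showing $z\,g_t'(z)=\mathrm{Var}_{P_z}(U)>0$ --- but this is in fact the same computation as the paper's, just packaged probabilistically: the paper expands the numerator of the quotient-rule derivative and symmetrizes the cross terms into $(u-v)^2 A^{(t)}_u A^{(t)}_v z^{u+v-1}$, which is precisely the identity $\mathrm{Var}(U)=\tfrac12\sum_{u,v}(u-v)^2 P(u)P(v)$ in disguise. Your version is arguably cleaner, and your justification of \emph{strict} positivity (the support of $P_z$ contains both $0$, since $A^{(t)}_0=1$, and $\bar u_t\ge r_t\ge 2$) is the same non-degeneracy fact the paper uses when it notes the first summation $\sum_u u^2 A^{(t)}_u z^{u-1}$ is strictly positive. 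For Properties 2 and 3 the paper simply declares the proofs straightforward; your explicit low-order and high-order expansions of $g_t$ (using $r_t\ge 2$ for the vanishing of $\sff(0)$ and $\sff'(0)$, and the dominance of $A^{(t)}_{\bar u_t}z^{\bar u_t}$ for the limit $M$) are exactly the intended arguments and are carried out correctly.
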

\begin{proof} We prove the first property, as the proofs of the second and the third properties are straightforward. The derivative of $\sff$ (normalized w.r.t. $\int\! \rho$) is given by
\begin{align*} 
\sum_{t \in I_c} \gamma_t \, \frac{A^{(t)}(z) \left[ \frac{\mathrm{d} A^{(t)}(z)}{\mathrm{d} z} + z \, \frac{\mathrm{d}^2 A^{(t)}(z)}{\mathrm{d} z^2} \right] - z \left[ \frac{\mathrm{d} A^{(t)}(z)}{\mathrm{d} z} \right] ^2}{[A^{(t)}(z)]^2} \, .
\end{align*}
The denominator of the fraction in each term in the sum is strictly positive for all $z > 0$. The numerator of the fraction in term $t \in I_c$ in the sum may be expanded as
\begin{align*}
\, & (1+\sum_{v=r_t}^{s_t} A^{(t)}_v z^v) (\sum_{u=r_t}^{s_t} u A^{(t)}_u z^{u-1} + \sum_{u=r_t}^{s_t} u(u-1) A^{(t)}_u z^{u-1})\\ \, & \phantom{-}\, - z (\sum_{u=r_t}^{s_t} u A^{(t)}_u z^{u-1}) (\sum_{v=r_t}^{s_t} v A^{(t)}_v z^{v-1}) \\
\, & = \sum_{u=r_t}^{s_t} u^2 A^{(t)}_u z^{u-1} + \sum_{u=r_t}^{s_t}\sum_{v=r_t}^{s_t} u(u-v)
A^{(t)}_u A^{(t)}_v z^{u+v-1} \, .
\end{align*}
Observe that in this expression, each term in the second summation with $u=v$ is zero, while each $(u,v)$ term in the second summation (with $u > v$) added to the corresponding $(v,u)$ term is positive for $z>0$, since $u(u-v) A^{(t)}_u A^{(t)}_v z^{u+v-1} + v(v-u) A^{(t)}_u A^{(t)}_v z^{u+v-1} = (u-v)^2 A^{(t)}_u A^{(t)}_v z^{u+v-1}> 0$ and therefore the second summation is nonnegative for $z>0$. Since the first summation is strictly positive for $z>0$, it follows that $\sff'(z)>0$ for all $z > 0$. 
\end{proof}

Note that, due to Lemma 1, the inverse of $\sff$, denoted by $\sff^{-1}:[0,M)\rightarrow\mathbb{R}^+$, is well-defined.


\section{Spectral Shape of Check-Hybrid GLDPC Codes}\label{section:spectral_shape_formula}
We next state and prove an expression for the spectral shape of check-hybrid GLDPC codes.
\begin{theorem}[Spectral shape of check-hybrid GLDPC codes]
Consider a GLDPC code ensemble with a regular VN set, composed of repetition codes all of length $q$, and a hybrid CN set, composed of a mixture of $n_c$ different linear block code types. Then, the weight spectral shape of the ensemble is given~by
\begin{align}\label{eq:G(alpha)_irregular_CN_set}
G(\alpha) & = (1-q) h(\alpha) -q\, \alpha \log \sff^{-1}(\alpha) \notag \\ 
\, & \phantom{------}+ q \left(\int\!\rho\right) \sum_{t\in I_c} \gamma_t \log A^{(t)}(\sff^{-1}(\alpha))
\end{align}
where $h(\alpha)=-\alpha\log \alpha -(1-\alpha)\log(1-\alpha)$ denotes the binary entropy function. 
\label{thm:main_result}
\end{theorem}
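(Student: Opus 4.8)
The plan is to compute $\mathbb{E}_{\cM_n}[N_{\alpha n}]$ exactly via a generating-function (finite-length enumerator) argument, and then extract the exponential growth rate using a saddle-point / Hayman-type estimate. First I would set up the combinatorial count: a codeword of weight $w = \alpha n$ selects $w$ of the $n$ VNs to be nonzero; since each VN is a length-$q$ repetition code, this activates exactly $qw$ of the $E = nq$ edges. The probability (over the uniform random permutation of the $E$ edges) that a given assignment of edge-values is consistent with the CN constraints is the number of ways the $qw$ ``one''-edges can be distributed among the CN sockets so that each CN of type $t$ sees a local codeword, divided by $\binom{E}{qw}$. Writing the per-CN-type local weight enumerator $A^{(t)}(z)$ and using the product form over CNs, this yields
\begin{align*}
\mathbb{E}_{\cM_n}[N_{\alpha n}] = \binom{n}{\alpha n}\,\binom{E}{q\alpha n}^{-1}\,\mathrm{coeff}_{z^{q\alpha n}} \prod_{t \in I_c} \bigl[A^{(t)}(z)\bigr]^{E\rho_t/s_t}\, .
\end{align*}

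Next I would take $\frac1n \log$ of this expression and pass to the limit. The two binomial coefficients contribute $h(\alpha) - q\,h(\alpha/q)\cdot(\text{appropriate normalization})$; more carefully, $\frac1n\log\binom{n}{\alpha n} \to h(\alpha)$ and $\frac1n\log\binom{nq}{q\alpha n} \to q\,h(\alpha)$, so together they give $(1-q)h(\alpha)$, matching the first term of \eqref{eq:G(alpha)_irregular_CN_set}. For the coefficient-extraction term I would use the standard bound/asymptotic $\frac1n \log \mathrm{coeff}_{z^{q\alpha n}} \prod_t [A^{(t)}(z)]^{E\rho_t/s_t} \to \inf_{z>0}\bigl[ -q\alpha \log z + q(\int\!\rho)\sum_t \gamma_t \log A^{(t)}(z)\bigr]$, recalling that the number of type-$t$ CNs is $E\rho_t/s_t = nq\rho_t/s_t = nq(\int\!\rho)\gamma_t$ by \eqref{eq:gamma_t_definition}. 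The infimum is attained where the derivative in $z$ vanishes, i.e. where $q\alpha = q(\int\!\rho)\sum_t \gamma_t\, z\,\frac{\mathrm{d}A^{(t)}/\mathrm{d}z}{A^{(t)}} = \sff(z)$; by Lemma~1 this has the unique solution $z = \sff^{-1}(\alpha)$ (valid precisely when $\alpha < M$, consistent with the domain of $\sff^{-1}$). Substituting $z = \sff^{-1}(\alpha)$ into the bracket gives exactly the remaining two terms of \eqref{eq:G(alpha)_irregular_CN_set}.

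The main obstacle is making the coefficient-extraction asymptotic rigorous: one must justify that $\frac1n\log\mathrm{coeff}_{z^{q\alpha n}}\prod_t [A^{(t)}(z)]^{c_t n}$ converges to the claimed Legendre-type infimum, with matching upper and lower bounds. The upper bound is elementary (for any $z>0$, a coefficient is at most the whole polynomial evaluated at $z$, i.e. $\mathrm{coeff}_{z^k} P(z) \le P(z)/z^k$, then optimize over $z$). The lower bound requires more care — typically one invokes a local central limit theorem or a Hayman-admissibility / saddle-point argument for the product of CN enumerators, or alternatively one cites the general D-GLDPC spectral-shape result \cite[Theorem~1]{Flanagan09:weight} and specializes it. I would take the latter route where possible: start from the $(4\times4)$ polynomial system of \cite[Theorem~1]{Flanagan09:weight}, specialize the VN enumerator to $B(x,y) = 1 + xy^q$ from \eqref{eq:B_repetition}, and show that the regular-VN structure lets one eliminate three of the four auxiliary variables in closed form, leaving a single equation equivalent to $\sff(z) = \alpha$.

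Finally I would verify the boundary/degenerate cases for completeness: at $\alpha = 0$ one checks $G(0) = 0$ using $\sff^{-1}(0) = 0$, $\sff(0)=\sff'(0)=0$ from Lemma~1, and $A^{(t)}(0)=1$; and one notes that the formula is only claimed on the interval where $\sff^{-1}(\alpha)$ is defined, namely $0 \le \alpha < M$, with $G(\alpha) = -\infty$ (no codewords) for $\alpha$ outside a suitable range — this is exactly why the limit in \eqref{eq:growth_rate_result} restricts to $n$ with $\mathbb{E}_{\cM_n}[N_{\alpha n}]$ positive. The routine parts (Stirling estimates for the binomials, continuity arguments to pass the $\inf$ through the limit) I would relegate to brief remarks.
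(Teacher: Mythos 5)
Your proposal is correct, but it takes a genuinely different route from the paper. You derive the first moment from scratch: the exact enumerator $\mathbb{E}_{\cM_n}[N_{\alpha n}] = \binom{n}{\alpha n}\binom{nq}{q\alpha n}^{-1}\mathrm{coeff}_{z^{q\alpha n}}\prod_t [A^{(t)}(z)]^{E\rho_t/s_t}$ is right, the Stirling limits of the two binomials give $(1-q)h(\alpha)$, and the Legendre-type infimum for the coefficient term, with stationarity condition $\sff(z)=\alpha$ and uniqueness from Lemma~1, reproduces the remaining two terms. The paper instead treats the general $(4\times 4)$ polynomial-system solution of \cite[Theorem~1]{Flanagan09:weight} as the starting point, specializes the VN enumerator to $B(x,y)=1+xy^q$, and solves the four equations sequentially for $\beta$, $z_0$, $y_0$, $x_0$ (obtaining $\beta=q\alpha$, $z_0=\sff^{-1}(\alpha)$, etc.) before substituting back; the only analytic input (existence and uniqueness of the positive saddle point) is outsourced to Hayman's formula via the cited result. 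Your approach buys transparency — it makes visible exactly which combinatorial object produces each term of \eqref{eq:G(alpha)_irregular_CN_set} and why the formula lives on $[0,M]$ — at the cost of having to prove the coefficient-extraction lower bound yourself; you correctly identify this as the one nontrivial analytic step, and your proposed fallback (specializing \cite[Theorem~1]{Flanagan09:weight}) is precisely the paper's proof, so either branch of your plan closes the argument. Two small points: the number of type-$t$ CNs in the exponent should be read as $E\rho_t/s_t = nq(\int\!\rho)\gamma_t$, which you do use correctly in the limit; and in your stationarity condition the chain "$q\alpha = q(\int\!\rho)\sum_t \gamma_t\, z\,\frac{\mathrm{d}A^{(t)}/\mathrm{d}z}{A^{(t)}} = \sff(z)$" has a stray factor of $q$ in the last equality (the middle expression equals $q\,\sff(z)$), though the conclusion $z=\sff^{-1}(\alpha)$ is unaffected.
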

\begin{proof} 
In \cite{Flanagan09:weight}, a polynomial system solution for the spectral shape of irregular D-GLDPC codes was derived. In the special case of D-GLDPC codes where all VNs are repetition codes of length $q$, this is given by (special case of equation (8) in \cite{Flanagan09:weight})
\begin{align}
G(\alpha) & = \log B(x_{0},y_{0}) - \alpha \log x_{0} \notag \\ 
\, & \phantom{--} + q \left(\int\! \rho \right) \sum_{t \in I_c} \gamma_t \log A^{(t)}(z_0) + q
\log(1 - \frac{\beta}{q})
\label{eq:growth_rate_polynomial_Tanner}
\end{align}
where the values of $x_0$, $y_0$, $z_0$, $\beta$ in \eqref{eq:growth_rate_polynomial_Tanner} are found by solving the $(4 \times 4)$ polynomial system
\begin{equation} 
\left( \int\! \rho \right) \sum_{t \in I_c} \gamma_t \frac{z_0 \, \frac{\mathrm{d} A^{(t)}(z_0)}{\mathrm{d} z}}{A^{(t)}(z_0)} = \frac{\beta}{q} \; ,
\label{eq:Tanner_eq1}
\end{equation}
\begin{equation} 
\frac{x_0 y_0^q}{1 + x_0 y_0^q} = \alpha \; ,
\label{eq:Tanner_eq2}
\end{equation}
\begin{equation} 
\frac{x_0 y_0^q}{1 + x_0 y_0^q} = \frac{\beta}{q} \; ,
\label{eq:Tanner_eq3}
\end{equation}
and
\begin{equation} 
\frac{z_0 y_0}{1 + z_0 y_0} = \frac{\beta}{q} \; .
\label{eq:Tanner_eq4}
\end{equation}
Note that we are certain of the existence of a unique real solution to the polynomial system such that $x_0>0$, $y_0>0$, $z_0>0$, $\beta>0$, due to Hayman's formula. We solve this system of equations sequentially for the variables $\beta$, $z_0$, $y_0$ and $x_0$ (respectively). First, combining (\ref{eq:Tanner_eq2}) and (\ref{eq:Tanner_eq3}) yields 
\begin{equation} 
\beta = q \alpha \; .
\label{eq:Tanner_eq_beta}
\end{equation}
Substituting (\ref{eq:Tanner_eq_beta}) into (\ref{eq:Tanner_eq1}) yields $\sff(z_0) = \alpha$ which may be written as 
\begin{equation} 
z_0 = \sff^{-1} (\alpha) \; .
\label{eq:Tanner_eq_z0}
\end{equation}
Using (\ref{eq:Tanner_eq_beta}) and (\ref{eq:Tanner_eq_z0}) in (\ref{eq:Tanner_eq4}) yields 
\begin{equation} 
y_0 = \frac{\alpha}{(1-\alpha) \sff^{-1}(\alpha)} \; .
\label{eq:Tanner_eq_y0}
\end{equation}
Finally, substituting (\ref{eq:Tanner_eq_beta}) and (\ref{eq:Tanner_eq_y0}) into (\ref{eq:Tanner_eq3}) yields
\begin{equation} 
x_0 = \left( \frac{\alpha}{1-\alpha} \right)^{1-q} \left( \sff^{-1} (\alpha) \right)^q \; .
\label{eq:Tanner_eq_x0}
\end{equation}
Substituting (\ref{eq:Tanner_eq_beta}), (\ref{eq:Tanner_eq_z0}), (\ref{eq:Tanner_eq_y0}) and (\ref{eq:Tanner_eq_x0}) into (\ref{eq:growth_rate_polynomial_Tanner}), and simplifying, leads to \eqref{eq:G(alpha)_irregular_CN_set}. 
\end{proof}

The expression \eqref{eq:G(alpha)_irregular_CN_set} holds regardless of whether the ensemble has
good or bad spectral shape behavior. Note that, according to \eqref{eq:G(alpha)_irregular_CN_set},
the growth rate $G(\alpha)$ is well-defined only for $\alpha \in [0,M]$. This is as expected due to
the following reasoning. A codeword of weight $\alpha n$ naturally induces a distribution of bits on
the Tanner graph edges, $\alpha n q$ of which are equal to $1$. Also note that the maximum number of
ones in this distribution occurs when a maximum weight local codeword is activated for each of the
$\gamma_t m$ CNs of type $t \in I_c$, and is thus given by $m \sum_{t \in I_c} \gamma_t \bar{u}_t$.
Hence, we have $\alpha n q \leq m \sum_{t \in I_c} \gamma_t \bar{u}_t$, i.e., $\alpha \leq M$.

In Appendix~\ref{appendix:asymptotics} it is shown how, for small relative weight~$\alpha$, \eqref{eq:G(alpha)_irregular_CN_set} simplifies to a known expression that was derived in \cite{Tillich04:weight} for Tanner codes, and extended in \cite{paolini08:weight} to irregular GLDPC codes.

By considering Theorem \ref{thm:main_result} in the special case of Tanner codes, we obtain the following corollary.
\begin{corollary}[Spectral shape of Tanner codes]\label{theorem:G(alpha)_tanner_codes}
Consider a Tanner code ensemble where all variable component codes are length-$q$ repetition codes and where all check component codes are length-$s$ codes with weight enumerating function $A(z)=1+\sum_{u=r}^s A_u z^u$. The weight spectral shape of this ensemble is given by
\begin{equation}\label{eq:G(alpha)_tanner_codes}
G(\alpha) = (1-q) h(\alpha) - q\,\alpha\,\log(\sff^{-1}(\alpha)) + \frac{q}{s}\log A(\sff^{-1}(\alpha)) 
\end{equation}
\noindent where the function $\sff$ is given by (special case of (\ref{eq:f}))
\begin{align}\label{eq:f_regular_case}
\sff(z) = \frac{z\, A'(z)}{s\, A(z)} \, ,
\end{align}
and $\sff^{-1}:[0,M)\rightarrow\mathbb{R}^+$ is well-defined, where $M=\frac{\bu}{s}$.
\end{corollary}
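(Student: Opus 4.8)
The plan is to derive Corollary~\ref{theorem:G(alpha)_tanner_codes} directly from Theorem~\ref{thm:main_result} by specializing to the single-CN-type case, so that essentially everything is a matter of collapsing the sums and the edge-fraction bookkeeping. First I would observe that in the Tanner setting $I_c = \{1\}$, so write $s_1 = s$, $r_1 = r$, $A^{(1)}(z) = A(z)$, and $\bar u_1 = \bar u$. In this case there is only one term in the sum defining $\rho(x)$, and since the $\rho_t$ must sum to $1$ we get $\rho(x) = x^{s-1}$; hence $\int\!\rho = \int_0^1 x^{s-1}\,\mathrm{d}x = 1/s$. Plugging into \eqref{eq:gamma_t_definition} gives $\gamma_1 = \rho_1/(s\int\!\rho) = 1$, as expected since all CNs are of the one type. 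With these substitutions, the factor $\left(\int\!\rho\right)\sum_{t\in I_c}\gamma_t(\cdot)$ appearing in both $\sff$ and in $G(\alpha)$ reduces to $\tfrac1s(\cdot)$ evaluated at $t=1$.

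Next I would carry out the two substitutions explicitly. For the function $\sff$ in \eqref{eq:f}: the sum collapses to the single term $\gamma_1 \, z A'(z)/A(z) = z A'(z)/A(z)$, and multiplying by $\int\!\rho = 1/s$ yields exactly \eqref{eq:f_regular_case}, namely $\sff(z) = z A'(z)/(s A(z))$. For the spectral shape formula \eqref{eq:G(alpha)_irregular_CN_set}: the term $q\left(\int\!\rho\right)\sum_{t\in I_c}\gamma_t \log A^{(t)}(\sff^{-1}(\alpha))$ becomes $q\cdot\tfrac1s\cdot 1\cdot\log A(\sff^{-1}(\alpha)) = \tfrac{q}{s}\log A(\sff^{-1}(\alpha))$, while the first two terms $(1-q)h(\alpha) - q\alpha\log\sff^{-1}(\alpha)$ are unchanged. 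Combining these gives \eqref{eq:G(alpha)_tanner_codes}.

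Finally I would address well-definedness of $\sff^{-1}$ in this special case. Lemma~1 applies verbatim to $\sff$ as given by \eqref{eq:f_regular_case}, so $\sff$ is strictly increasing on $(0,\infty)$ with $\sff(0)=0$ and $\lim_{z\to+\infty}\sff(z) = M$, where by Definition~\ref{eq:M_definition} we have $M = \left(\int\!\rho\right)\sum_{t\in I_c}\gamma_t\bar u_t = \tfrac1s\cdot 1\cdot\bar u = \bar u/s$. Hence $\sff:\mathbb{R}^+\to[0,\bar u/s)$ is a bijection and $\sff^{-1}:[0,\bar u/s)\to\mathbb{R}^+$ is well-defined, completing the corollary.

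There is no real obstacle here: the corollary is a pure specialization, and the only thing requiring a moment's care is the arithmetic showing $\int\!\rho = 1/s$ and $\gamma_1 = 1$ so that the prefactors line up correctly — in particular making sure the $q\int\!\rho$ in \eqref{eq:G(alpha)_irregular_CN_set} becomes $q/s$ and not, say, $q$ or $q/s^2$. Once that bookkeeping is pinned down, both \eqref{eq:f_regular_case} and \eqref{eq:G(alpha)_tanner_codes} follow by inspection, and the statement about $M = \bar u/s$ and the domain of $\sff^{-1}$ is immediate from Lemma~1 and \eqref{eq:M_definition}.
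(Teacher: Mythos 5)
Your proof is correct and follows exactly the route the paper intends: the paper states this corollary as an immediate specialization of Theorem~\ref{thm:main_result} to a single CN type (without writing out the bookkeeping), and your computation of $\rho(x)=x^{s-1}$, $\int\!\rho=1/s$, $\gamma_1=1$, and $M=\bar{u}/s$ is precisely the omitted verification.
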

Note that, in the special case where all CNs are SPC codes, \eqref{eq:G(alpha)_tanner_codes} becomes equal to the spectral shape expression for regular LDPC codes developed in \cite[Theorem~2]{orlitsky05:stopping} for the case of stopping sets. Also note that, in some cases, \eqref{eq:G(alpha)_tanner_codes} can be expressed analytically as $\sff^{-1}(\alpha)$ admits an analytical form. An example is given in Appendix~\ref{appendix:closed_form}.


\section{Symmetry of the Weight Spectral Shape}
Consider a GLDPC code ensemble with a regular VN set and a hybrid CN set. In this section, we show how a symmetry in the overall weight spectral shape of the ensemble is induced by local symmetry properties in the WEFs of the CNs.

\begin{definition}
For CN type $t \in I_c$, let $U_t = \{u \in \mathbb{N} | A^{(t)}_u > 0\}$. Then, we define
\begin{align}\label{eq:U*_definition}
U^{(t*)} = \{v\in\mathbb{N} | \bar{u}_t-v \in U_t \}\, .
\end{align}
\end{definition}
Note that for any CN type, we always have $0 \in U^{(t*)}$ and $\bar{u}_t \in U^{(t*)}$.

\begin{definition}\label{def:A_symmetry}
The WEF of CN type $t \in I_c$ is said to be \emph{symmetric} if and only if $A^{(t)}_{\bar{u}_t-u}=A^{(t)}_u$ for all $u \in U_t$.
\end{definition}
Note that if the WEF of CN type $t \in I_c$ is symmetric, then we have $U^{(t*)}=U^{(t)}$.

\begin{lemma}\label{lemma:all_1_codeword}
The WEF of CN type $t \in I_c$ is symmetric if and only if the all-$1$ codeword belongs to this code.
\end{lemma}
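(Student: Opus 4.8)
The plan is to prove both directions of the equivalence via a correspondence between codewords and their complements. Let $C_t$ denote the type-$t$ code, with WEF $A^{(t)}(z) = \sum_u A^{(t)}_u z^u$, and recall that $\bar u_t$ is the largest weight occurring in $C_t$. First I would establish the easy direction: if the all-$1$ vector $\mathbf{1}$ belongs to $C_t$, then since $C_t$ is linear, the map $\mathbf{c} \mapsto \mathbf{c} + \mathbf{1}$ is a weight-complementing bijection of $C_t$ onto itself, sending a weight-$u$ codeword to a weight-$(s_t - u)$ codeword. In particular the maximum weight is $\bar u_t = s_t - 0 = s_t$ (attained by $\mathbf{1}$, complementing the all-zero codeword), so $s_t \in U_t$ and the complement map becomes $u \mapsto \bar u_t - u = s_t - u$; counting codewords on both sides of this bijection gives $A^{(t)}_{\bar u_t - u} = A^{(t)}_u$ for all $u$, i.e.\ the WEF is symmetric in the sense of Definition~\ref{def:A_symmetry}.

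For the converse, suppose the WEF of $C_t$ is symmetric. The key observation I would use is that symmetry forces $\bar u_t = s_t$: indeed, taking $u = 0 \in U_t$ in the symmetry relation gives $A^{(t)}_{\bar u_t} = A^{(t)}_0 = 1$, and I must rule out $\bar u_t < s_t$. The cleanest way is to argue via the absence of idle bits (every coordinate is nontrivial, since no generator-matrix column is all-zero): if all codewords had weight at most $\bar u_t < s_t$, I would derive a contradiction by a counting/averaging argument on coordinate activity, or alternatively observe that a nonzero coordinate together with the $A^{(t)}_{\bar u_t}=1$ structure is incompatible with $\bar u_t < s_t$ under full symmetry — this is the step I expect to require the most care. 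Once $\bar u_t = s_t$ is in hand, symmetry reads $A^{(t)}_{s_t - u} = A^{(t)}_u$ for all $u$, and in particular $A^{(t)}_{s_t} = A^{(t)}_0 = 1 > 0$, so $C_t$ contains a weight-$s_t$ codeword, which is necessarily $\mathbf{1}$.

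The main obstacle, then, is the converse direction's claim that symmetry of the WEF already implies $\bar u_t = s_t$ (equivalently, that a weight-$s_t$ codeword exists). I would handle this by exploiting linearity more fully: consider the coset/complement structure, or equivalently note that if $\mathbf{c}$ has weight $\bar u_t$ then the symmetry count $A^{(t)}_0 = A^{(t)}_{\bar u_t} = 1$ says $\mathbf{c}$ is the \emph{unique} maximum-weight codeword, hence fixed (as a set of supports) under any coordinate permutation preserving $C_t$; combined with no idle bits this pins its support to be everything. A fallback, should a slick argument not materialize, is simply to cite that a linear code with a symmetric weight distribution and no idle coordinates must contain $\mathbf{1}$ — this is a standard fact, and the proof reduces to the coordinate-activity averaging sketched above. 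With both directions assembled, the lemma follows, and as noted in the paper this immediately yields $U^{(t*)} = U_t$ for symmetric CN types.
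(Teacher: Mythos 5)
Your plan is essentially the paper's: the authors omit the proof but state in a footnote that the forward direction (all-one codeword $\Rightarrow$ symmetric WEF) is classical, and that the converse can be proved either by deriving a contradiction from assuming the maximum weight is $<s_t$, or by an expected-weight argument; your ``coordinate-activity averaging'' fallback is precisely the latter and it does close the gap you flag. To make it airtight, write it out: since no column of the generator matrix is zero, for each coordinate $i$ the codewords vanishing at $i$ form a proper subgroup, hence one of index $2$, so exactly half of the $2^{h_t}$ codewords have a $1$ in position $i$ and the average codeword weight is $s_t/2$; on the other hand, the symmetry $A^{(t)}_{\bar{u}_t-u}=A^{(t)}_u$ pairs weights about $\bar{u}_t/2$, so the average weight is also $\bar{u}_t/2$. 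Hence $\bar{u}_t=s_t$, and $A^{(t)}_{s_t}=A^{(t)}_0=1$ exhibits the all-one codeword. The forward direction via the weight-complementing bijection $\mathbf{c}\mapsto\mathbf{c}+\mathbf{1}$ is correct as you state it.

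One of the alternative routes you offer for the converse does not work and should be dropped: from $A^{(t)}_{\bar{u}_t}=1$ you argue that the unique maximum-weight codeword is fixed by every coordinate permutation preserving the code, and that this together with the absence of idle bits pins its support to all of $\{1,\dots,s_t\}$. A linear code with no idle coordinates can have a trivial automorphism group, in which case ``fixed under all code-preserving permutations'' says nothing about the support. Commit to the averaging computation (or to the paper's contradiction argument) rather than leaving the key step as a menu of options, one of which is a dead end.
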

The proof of Lemma~\ref{lemma:all_1_codeword} is omitted due to space constraints.\footnote{The
sufficient condition (if a linear block code has the all-$1$ codeword then its WEF is symmetric) is
a well-known result in classical coding theory. On the other hand, we proved the necessary condition
by assuming that CN type $t \in I_c$ has a symmetric WEF and a maximum codeword weight $D<s_t$, and
by showing that these assumptions lead to a contradiction. As pointed out by one of the anonymous
reviewers, the necessary condition may also be proved by reasoning on the expected weight of a
randomly selected codeword in a linear block code, under the symmetry hypothesis.}
\begin{lemma}\label{lemma:A_symmetry}
The WEF of CN type $t \in I_c$ fulfills
\begin{equation}
A^{(t)}(z) = z^{\bar{u}_t} A^{(t)}\left(z^{-1}\right) \; ,
\label{eq:symmetry_of_CN_type} 
\end{equation} 
for all $z \in \mathbb{R}^+$ if and only if it is symmetric.
\end{lemma}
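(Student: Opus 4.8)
\textbf{Proof proposal for Lemma~\ref{lemma:A_symmetry}.}

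The plan is to prove the equivalence of the polynomial identity \eqref{eq:symmetry_of_CN_type} with the coefficient symmetry of Definition~\ref{def:A_symmetry} by directly comparing coefficients on both sides, which is the natural approach since both conditions are statements about the coefficients $A^{(t)}_u$. First I would expand the right-hand side of \eqref{eq:symmetry_of_CN_type}: writing $A^{(t)}(z) = \sum_{u=0}^{\bar{u}_t} A^{(t)}_u z^u$ (noting that all coefficients with $u > \bar{u}_t$ vanish by definition of $\bar{u}_t$), we get $z^{\bar{u}_t} A^{(t)}(z^{-1}) = \sum_{u=0}^{\bar{u}_t} A^{(t)}_u z^{\bar{u}_t - u}$. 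Re-indexing the sum with $v = \bar{u}_t - u$ turns this into $\sum_{v=0}^{\bar{u}_t} A^{(t)}_{\bar{u}_t - v} z^v$. Thus \eqref{eq:symmetry_of_CN_type} holds as an identity of rational functions on $\mathbb{R}^+$ if and only if $A^{(t)}_{\bar{u}_t - v} = A^{(t)}_v$ for every $v \in \{0, 1, \dots, \bar{u}_t\}$, since a polynomial of degree at most $\bar{u}_t$ is determined by its coefficients (equivalently, by its values on any infinite set such as $\mathbb{R}^+$).

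The remaining step is to check that this coefficient condition for all $v \in \{0,\dots,\bar{u}_t\}$ is precisely the symmetry condition $A^{(t)}_{\bar{u}_t - u} = A^{(t)}_u$ for all $u \in U_t$ of Definition~\ref{def:A_symmetry}. One direction is immediate: if the coefficients match for all $v$ in the full range, they match in particular for $v \in U_t$. For the converse, suppose $A^{(t)}_{\bar{u}_t - u} = A^{(t)}_u$ for all $u \in U_t$; I need to show the equality also holds for $v \notin U_t$, i.e., for $v$ with $A^{(t)}_v = 0$. In that case I must show $A^{(t)}_{\bar{u}_t - v} = 0$ as well. Suppose not, so $\bar{u}_t - v \in U_t$; then applying the hypothesis with $u = \bar{u}_t - v \in U_t$ gives $A^{(t)}_{\bar{u}_t - u} = A^{(t)}_u$, i.e. $A^{(t)}_v = A^{(t)}_{\bar{u}_t - v} > 0$, contradicting $A^{(t)}_v = 0$. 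Hence the two formulations of symmetry coincide, completing the equivalence.

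The argument is essentially bookkeeping, so I do not anticipate a genuine obstacle; the one point requiring a little care is the converse direction just described, namely ensuring that the index-set $U_t$ on which Definition~\ref{def:A_symmetry} imposes the condition is in fact ``symmetric enough'' to propagate the coefficient equality to all indices in $\{0,\dots,\bar{u}_t\}$ — which, as shown above, follows from a short contradiction argument. An alternative, even shorter route would be to invoke Lemma~\ref{lemma:all_1_codeword}: symmetry of the WEF is equivalent to the all-$1$ codeword lying in the code, and adding the all-$1$ codeword to any codeword $\mathbf{c}$ of weight $u$ yields a codeword of weight $s_t - u$; combined with $\bar{u}_t = s_t$ (which holds exactly when the all-$1$ word is present), this gives the coefficient symmetry, hence \eqref{eq:symmetry_of_CN_type}. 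I would, however, prefer the self-contained coefficient-comparison proof above, as it avoids relying on the (here unproven) Lemma~\ref{lemma:all_1_codeword}.
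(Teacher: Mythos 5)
Your proposal is correct and follows essentially the same route as the paper: expand $z^{\bar{u}_t}A^{(t)}(z^{-1})$, re-index with $v=\bar{u}_t-u$, and compare coefficients, using the fact that a polynomial identity on $\mathbb{R}^+$ forces coefficient equality. Your extra contradiction argument showing that the condition on $U_t$ propagates to all indices in $\{0,\dots,\bar{u}_t\}$ is a nice touch of rigor that the paper handles only implicitly via the observation $U^{(t*)}=U_t$.
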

\begin{proof}
We have
\begin{align*}
A^{(t)}\left(z^{-1}\right) = \frac{\sum_{u\in U_t} A^{(t)}_u z^{\bar{u}_t-u}}{z^{\bar{u}_t}} =
\frac{\sum_{v\in U^{(t*)}} A^{(t)}_{\bar{u}_t-v} z^{v}}{z^{\bar{u}_t}}
\end{align*}
where the final equality is obtained by $v=\bar{u}_t-u$. The proof is completed by observing that, if and only if $A^{(t)}(z)$ is symmetric, we have $U^{(t*)}=U_{t}$ and $A^{(t)}_{\bar{u}_t-v}=A^{(t)}_v$ for all $v \in U_t$.
\end{proof}

\begin{lemma}\label{lemma:f_symmetry}
The function $\sff$ defined by (\ref{eq:f}) fulfills
\begin{align}\label{eq:f_symmetry}
\sff(z) = M - \sff\left(z^{-1}\right)
\end{align}
$\forall$ $z \in \mathbb{R}^+$ if and only if $A^{(t)}(z)$ is symmetric for every $t \in I_c$.
\end{lemma}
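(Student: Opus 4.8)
The plan is to prove the two implications separately, using the single value $z=1$ as the decisive probe for the ``only if'' direction and Lemma~\ref{lemma:A_symmetry} for the ``if'' direction.

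For necessity (``$\Rightarrow$''), I would simply evaluate the hypothesised identity at $z=1$, which reads $2\,\sff(1)=M$, and then compute $\sff(1)$ directly from \eqref{eq:f}. Since CN type $t$ is a code of dimension $h_t$, we have $A^{(t)}(1)=\sum_u A^{(t)}_u = 2^{h_t}$; and since no CN has idle bits, each of the $s_t$ coordinates of a type-$t$ CN equals $1$ in exactly half of its $2^{h_t}$ codewords, so $\frac{\mathrm d A^{(t)}}{\mathrm d z}\big|_{z=1}=\sum_u u A^{(t)}_u = s_t\,2^{h_t-1}$, whence the $t$-th summand in \eqref{eq:f} equals $\gamma_t s_t/2$. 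Using $\gamma_t s_t = \rho_t/\!\int\!\rho$ and $\sum_{t\in I_c}\rho_t = 1$ then gives $\sff(1)=\tfrac12$, so the identity forces $M=1$. By the remark following \eqref{eq:M_definition}, $M=1$ is equivalent to $\bar u_t = s_t$ for every $t\in I_c$; since the only length-$s_t$ word of weight $s_t$ is the all-one word, every CN code contains the all-one codeword, and Lemma~\ref{lemma:all_1_codeword} then yields that $A^{(t)}(z)$ is symmetric for each $t\in I_c$.

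For sufficiency (``$\Leftarrow$''), assume $A^{(t)}(z)$ is symmetric for every $t\in I_c$. By Lemma~\ref{lemma:A_symmetry}, $A^{(t)}(z)=z^{\bar u_t}A^{(t)}(z^{-1})$ for all $z\in\mathbb{R}^+$. Taking logarithms and applying the operator $z\,\tfrac{\mathrm d}{\mathrm d z}$ — so that the chain rule converts the factor $z^{\bar u_t}$ into the additive constant $\bar u_t$ and turns $\log A^{(t)}(z^{-1})$ into $-\,z^{-1}A^{(t)\prime}(z^{-1})/A^{(t)}(z^{-1})$ — gives
\begin{equation*}
\frac{z\,A^{(t)\prime}(z)}{A^{(t)}(z)} \;=\; \bar u_t \;-\; \frac{z^{-1}A^{(t)\prime}(z^{-1})}{A^{(t)}(z^{-1})}\,,\qquad \forall\, t\in I_c .
\end{equation*}
Multiplying through by $\bigl(\int\!\rho\bigr)\gamma_t$ and summing over $t\in I_c$, the left-hand side is $\sff(z)$ by \eqref{eq:f}, the subtracted term is $\sff(z^{-1})$, and the constant term is $\bigl(\int\!\rho\bigr)\sum_{t\in I_c}\gamma_t\bar u_t = M$; this is precisely \eqref{eq:f_symmetry}.

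Neither direction is long; the only mildly delicate point is the evaluation $\sff(1)=\tfrac12$ used in the necessity part, which depends crucially on the standing ``no idle bits'' assumption (it is what makes each coordinate of each component code balanced, giving the clean value $s_t/2$ for the $t$-th summand). Everything else is bookkeeping plus the previously established Lemmas~\ref{lemma:all_1_codeword} and \ref{lemma:A_symmetry}. I would also be careful to track the sign produced by the chain rule in the sufficiency part, since it is exactly what produces ``$M-\sff(z^{-1})$'' rather than ``$M+\sff(z^{-1})$''.
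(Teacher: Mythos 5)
Your proof is correct, and while the sufficiency half coincides with the paper's argument, your necessity half takes a genuinely different (and tighter) route. For the ``if'' direction, taking the logarithm of \eqref{eq:symmetry_of_CN_type} and applying $z\,\frac{\mathrm{d}}{\mathrm{d}z}$ is computationally the same as what the paper does --- it differentiates \eqref{eq:symmetry_of_CN_type} directly, multiplies by $z$, and resubstitutes \eqref{eq:symmetry_of_CN_type} to arrive at the identity $z A^{(t)\prime}(z)/A^{(t)}(z) = \bar u_t - z^{-1}A^{(t)\prime}(z^{-1})/A^{(t)}(z^{-1})$ before summing over $t$ --- so nothing is gained or lost there beyond a slightly cleaner bookkeeping of the sign. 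The real divergence is in the ``only if'' direction. The paper handles necessity only implicitly, by prefacing each differentiation step with ``if and only if''; this is delicate because $\sff$ is a \emph{sum} over CN types, and the converse of an identity for the sum does not obviously propagate back to each individual summand. Your argument sidesteps this entirely: evaluating \eqref{eq:f_symmetry} at the single point $z=1$, computing $\sff(1)=\tfrac12$ from the no-idle-bits assumption (each coordinate of a type-$t$ CN is $1$ in exactly $2^{h_t-1}$ codewords, so the $t$-th summand of \eqref{eq:f} at $z=1$ is $\gamma_t s_t/2$, and $\sum_t \gamma_t s_t \int\!\rho = \sum_t\rho_t=1$), forcing $M=1$, hence $\bar u_t=s_t$ for all $t$ by the remark after \eqref{eq:M_definition}, hence the all-one codeword in each CN code, hence symmetry via Lemma~\ref{lemma:all_1_codeword}. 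This is self-contained, uses the hypothesis at only one point, and as a bonus supplies the proof of the fact $\sff(1)=\tfrac12$ that the paper asserts without proof after Theorem~\ref{theorem:G_symmetry_necessary}. The only dependency worth flagging is that your necessity chain leans on Lemma~\ref{lemma:all_1_codeword}, whose proof the paper omits, whereas your sufficiency direction leans only on Lemma~\ref{lemma:A_symmetry}; both are legitimately available to you here.
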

\begin{proof}
First we note that if and only if the WEF of CN type $t \in I_c$ is symmetric, we have, differentiating (\ref{eq:symmetry_of_CN_type}),
\[
\frac{\mathrm{d} A^{(t)}(z)}{\mathrm{d} z} = - z^{\bar{u}_t-2} \frac{\mathrm{d} A^{(t)}(z^{-1})}{\mathrm{d} z^{-1}} + \bar{u}_t z^{\bar{u}_t-1} A^{(t)}(z^{-1}) \, .
\]
Multiplying by $z$ and using (\ref{eq:symmetry_of_CN_type}) yields
\begin{equation}
z \frac{\mathrm{d} A^{(t)}(z)}{\mathrm{d} z} = - z^{\bar{u}_t-1} \frac{\mathrm{d} A^{(t)}(z^{-1})}{\mathrm{d} z^{-1}} + \bar{u}_t A^{(t)}(z) \; .
\label{eq:symmetry_of_CN_type_derivative}
\end{equation}
Then,
\begin{align*}
M - \sff\left(z^{-1} \right) & = M - \left( \int\! \rho \right) \sum_{t \in I_c} \gamma_t \left( \frac{z^{-1} \frac{\mathrm{d} A^{(t)}(z^{-1})}{\mathrm{d} z^{-1}}}{A^{(t)}(z^{-1})} \right) \\
\, & \!\!\!\!\!\!\!\!\!\!\!\!\!\!\!\!\!\!\!\!\!\!\!\!\!\!\!\!\!\!\!\!\stackrel{\textrm{(a)}}{=}
\left( \int\! \rho \right) \sum_{t \in I_c} \gamma_t \left( \bu_t - \frac{z^{\bar{u}_t-1}
\frac{\mathrm{d} A^{(t)}(z^{-1})}{\mathrm{d} z^{-1}}}{A^{(t)}(z)} \right) \\
\, & \!\!\!\!\!\!\!\!\!\!\!\!\!\!\!\!\!\!\!\!\!\!\!\!\!\!\!\!\!\!\!\!\stackrel{\textrm{(b)}}{=} \left( \int\! \rho \right) \sum_{t \in I_c} \gamma_t \frac{\bu_t A^{(t)}(z) + z \frac{\mathrm{d} A^{(t)}(z)}{\mathrm{d} z} - \bar{u}_t A^{(t)}(z)}{A^{(t)}(z)}=\sff(z)
\end{align*}
where we have used \eqref{eq:M_definition} and \eqref{eq:symmetry_of_CN_type} in (a), and
\eqref{eq:symmetry_of_CN_type_derivative} in (b).
\end{proof}

\begin{lemma}\label{lemma:finv_symmetry}
The inverse function $\sff^{-1}$ fulfills
\begin{align}\label{eq:finv_symmetry}
\sff^{-1}(M-\alpha)=\frac{1}{\sff^{-1}(\alpha)}
\end{align}
$\forall$ $\alpha \in (0,1)$ if and only if $A^{(t)}(z)$ is symmetric for every \mbox{$t \in I_c$}.
\end{lemma}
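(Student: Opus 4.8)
The plan is to obtain \eqref{eq:finv_symmetry} as an essentially formal consequence of Lemma~\ref{lemma:f_symmetry}, exploiting the fact (established via Lemma~1 and recorded in the remark following it) that $\sff$ is strictly increasing with $\sff(0)=0$ and $\sff(z)\to M$, so that $\sff^{-1}:[0,M)\to\mathbb{R}^+$ is a genuine two-sided inverse. I would also note at the outset that in the symmetric case $M=1$: by Lemma~\ref{lemma:all_1_codeword}, symmetry of every $A^{(t)}$ means each type-$t$ code contains the all-$1$ codeword, which has weight $s_t$, hence $\bu_t=s_t$ for all $t\in I_c$, and then $M=1$ by \eqref{eq:M_definition} (cf.\ the remark after the first Definition). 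Thus the interval $(0,1)$ in the statement is exactly the natural domain of $\sff^{-1}$ restricted to positive arguments, and both sides of \eqref{eq:finv_symmetry} are well-defined precisely there.

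For the ``if'' direction I would fix $\alpha\in(0,1)=(0,M)$, set $z=\sff^{-1}(\alpha)>0$ so that $\sff(z)=\alpha$, and apply Lemma~\ref{lemma:f_symmetry} to obtain $\sff(z^{-1})=M-\sff(z)=M-\alpha$. Since $M-\alpha\in(0,M)$ and $\sff$ is injective on $(0,\infty)$, applying $\sff^{-1}$ to both sides gives $z^{-1}=\sff^{-1}(M-\alpha)$, i.e.\ $\sff^{-1}(M-\alpha)=1/\sff^{-1}(\alpha)$, which is \eqref{eq:finv_symmetry}.

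For the ``only if'' direction I would run the same substitution in reverse. Assuming \eqref{eq:finv_symmetry} holds for all $\alpha\in(0,1)$, first observe that $\sff^{-1}$ must be defined on all of $(0,1)$, which forces $M\ge1$ and hence $M=1$ by \eqref{eq:M_definition}. Then for an arbitrary $z>0$ put $\alpha=\sff(z)\in(0,M)$, so that $\sff^{-1}(\alpha)=z$; the hypothesis gives $\sff^{-1}(M-\alpha)=z^{-1}$, and applying $\sff$ yields $\sff(z^{-1})=M-\alpha=M-\sff(z)$. As $z>0$ was arbitrary, $\sff$ satisfies \eqref{eq:f_symmetry}, and Lemma~\ref{lemma:f_symmetry} then delivers that $A^{(t)}(z)$ is symmetric for every $t\in I_c$.

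The only step that requires real care is the domain bookkeeping: I need the strict monotonicity from Lemma~1 to legitimately treat $\sff$ and $\sff^{-1}$ as mutually inverse when composing them, and I need the observation that symmetry forces $M=1$ so that the statement's interval $(0,1)$ coincides with the domain of $\sff^{-1}$ and no argument ever leaves the region where the functions are defined. Once those two points are secured, each implication reduces to a single substitution into Lemma~\ref{lemma:f_symmetry}.
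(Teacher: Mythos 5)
Your proof is correct and follows essentially the same route as the paper's: both reduce the statement to Lemma~\ref{lemma:f_symmetry} via the substitution $\alpha = \sff(z^{-1})$ and an application of the inverse function. The paper compresses this into one sentence, whereas you additionally spell out the two directions and the domain bookkeeping (in particular that symmetry forces $M=1$, so $(0,1)$ is the right interval); that extra care is sound and only makes explicit what the paper leaves implicit.
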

\begin{proof}
By Lemma~\ref{lemma:f_symmetry}, the function $\sff$ fulfills \eqref{eq:f_symmetry} if and only if $A^{(t)}(z)$ is symmetric for every $t \in I_c$. By applying the inverse function to both sides of \eqref{eq:f_symmetry} and by letting $\sff(z^{-1})=\alpha$ for all $z \in \mathbb{R}^+ \backslash \{0\}$, we obtain the statement.
\end{proof}

\begin{theorem}[Sufficient condition for symmetry]\label{theorem:G_symmetry_sufficient} Consider a GLDPC code ensemble with a regular VN set, composed of repetition codes all of length $q$, and a hybrid CN set, composed of a mixture of $n_c$ different linear block codes. If $A^{(t)}(z)$ is symmetric for each $t \in I_c$, then the spectral shape of the ensemble fulfills
\begin{align}\label{eq:G_symmetry}
G(M-\alpha) = G(\alpha)
\end{align}
for all $\alpha \in (0,1)$.
\end{theorem}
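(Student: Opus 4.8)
The plan is to substitute $\alpha \mapsto M - \alpha$ directly into the closed-form expression \eqref{eq:G(alpha)_irregular_CN_set} and show that the result reduces to $G(\alpha)$ term by term, using the symmetry relations already established in Lemmas~\ref{lemma:A_symmetry} and~\ref{lemma:finv_symmetry}. Writing $\zeta = \sff^{-1}(\alpha)$, Lemma~\ref{lemma:finv_symmetry} gives $\sff^{-1}(M-\alpha) = 1/\zeta$, so I would first rewrite each of the three pieces of $G(M-\alpha)$ in terms of $\zeta$: the entropy term becomes $(1-q)h(M-\alpha)$, the logarithmic term becomes $-q(M-\alpha)\log(1/\zeta) = q(M-\alpha)\log\zeta$, and the CN term becomes $q(\int\!\rho)\sum_t \gamma_t \log A^{(t)}(1/\zeta)$.

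Next I would invoke Lemma~\ref{lemma:A_symmetry}: since each $A^{(t)}$ is symmetric, $A^{(t)}(1/\zeta) = \zeta^{-\bar u_t} A^{(t)}(\zeta)$, hence $\log A^{(t)}(1/\zeta) = \log A^{(t)}(\zeta) - \bar u_t \log\zeta$. Summing against $q(\int\!\rho)\gamma_t$ and recalling the definition \eqref{eq:M_definition} of $M = (\int\!\rho)\sum_t \gamma_t \bar u_t$, the CN term becomes $q(\int\!\rho)\sum_t \gamma_t \log A^{(t)}(\zeta) - qM\log\zeta$. Adding this to the logarithmic term $q(M-\alpha)\log\zeta$, the $qM\log\zeta$ contributions cancel and I am left with $q(\int\!\rho)\sum_t \gamma_t \log A^{(t)}(\zeta) - q\alpha\log\zeta$, which are exactly the last two terms of $G(\alpha)$ in \eqref{eq:G(alpha)_irregular_CN_set}.

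It then remains to handle the entropy term: I must show $(1-q)h(M-\alpha) = (1-q)h(\alpha)$, i.e.\ $h(M-\alpha) = h(\alpha)$. This is where the hypothesis that \emph{every} $A^{(t)}$ is symmetric is needed again, but now through its consequence $M = 1$: by the remark following Definition~\ref{eq:M_definition}, $M = 1$ iff $\bar u_t = s_t$ for all $t$, and by Lemma~\ref{lemma:all_1_codeword} each symmetric CN code contains the all-$1$ codeword, so indeed $\bar u_t = s_t$ and $M = 1$. Therefore $h(M-\alpha) = h(1-\alpha) = h(\alpha)$ by the elementary symmetry of the binary entropy function, and the three pieces together yield $G(M-\alpha) = (1-q)h(\alpha) - q\alpha\log\zeta + q(\int\!\rho)\sum_t \gamma_t \log A^{(t)}(\zeta) = G(\alpha)$.

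The one subtlety I expect to be the main obstacle — really a bookkeeping point rather than a deep one — is making sure the domains line up: Lemma~\ref{lemma:finv_symmetry} is stated for $\alpha \in (0,1)$, and since $M = 1$ under the hypothesis this matches the natural domain $[0,M)$ of $\sff^{-1}$ and the range of $\alpha$ in the statement, so $M - \alpha$ also lies in $(0,1)$ and every invocation is legitimate; the boundary cases $\alpha \to 0$ and $\alpha \to M$ follow by continuity. I would also note explicitly that establishing $M=1$ must come \emph{before} using $h(M-\alpha)=h(\alpha)$, since the entropy identity is false for $M \neq 1$ — this is the place where the full strength of ``$A^{(t)}$ symmetric for \emph{every} $t$'' (as opposed to a weaker partial-symmetry assumption) is genuinely used.
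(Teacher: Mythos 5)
Your proposal is correct and follows essentially the same route as the paper's own proof: it first establishes $M=1$ via Lemma~\ref{lemma:all_1_codeword}, then substitutes $M-\alpha$ into \eqref{eq:G(alpha)_irregular_CN_set}, applies Lemma~\ref{lemma:finv_symmetry} to replace $\sff^{-1}(M-\alpha)$ by $1/\sff^{-1}(\alpha)$, and uses Lemma~\ref{lemma:A_symmetry} together with the definition \eqref{eq:M_definition} of $M$ to cancel the $qM\log\zeta$ contributions, leaving $h(M-\alpha)=h(1-\alpha)=h(\alpha)$ to close the argument. Your explicit remark that $M=1$ must be secured before invoking the entropy symmetry is a point the paper makes only implicitly, but the substance is identical.
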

\begin{proof}
By Lemma \ref{lemma:all_1_codeword}, the hypothesis that $A^{(t)}(z)$ is symmetric for each $t \in I_c$ implies $\bar{u}_t = s_t$ for every $t \in I_c$, and therefore $M=1$. From \eqref{eq:G(alpha)_irregular_CN_set} we have:
\begin{align*}
\, & G(M-\alpha) = (1-q) h(M-\alpha) - q(M-\alpha) \log \sff^{-1}(M-\alpha) \\ 
\, & \phantom{--------} + q \left(\int\! \rho\right) \sum_{t \in I_c} \gamma_t \log A^{(t)} \left(\sff^{-1}(M-\alpha)\right) \\
\, & \stackrel{\textrm{(a)}}{=} (1-q) h(M-\alpha) - q(M-\alpha) \log \frac{1}{\sff^{-1}(\alpha)}  \\ 
\, & \phantom{--------} + q \left(\int\! \rho\right) \sum_{t \in I_c} \gamma_t \log A^{(t)} \left(\frac{1}{\sff^{-1}(\alpha)}\right) \\
\, & \stackrel{\textrm{(b)}}{=} (1-q) h(M-\alpha) - q\, \alpha\, \log(\sff^{-1}(\alpha)) \\ 
\, & \phantom{----------} + q \left(\int\! \rho\right) \sum_{t \in I_c} \gamma_t \log A^{(t)}
(\sff^{-1}(\alpha)) \\
\, & = G(\alpha)
\end{align*}
where (a) follows from symmetry of $A^{(t)}(z)$ and Lemma~\ref{lemma:finv_symmetry}, (b) from symmetry of $A^{(t)}(z)$, Lemma~\ref{lemma:A_symmetry} and \eqref{eq:M_definition}, and the final line from $M=1$.
\end{proof}
\begin{theorem}[Necessary condition for symmetry]\label{theorem:G_symmetry_necessary} Consider a GLDPC code ensemble with a regular VN set, composed of repetition codes all of length $q$, and a hybrid CN set, composed of a mixture of $n_c$ different linear block codes. If the spectral shape fulfills $G(M-\alpha) = G(\alpha)$ for all $\alpha \in (0,1)$, then $M$ is a fixed point of the function
\begin{equation}\label{eq:Gamma}
\Gamma(x)=2\,\, \sff\left( \left(\frac{x}{2-x}\right)^{\frac{q-1}{q}} \right)\, .
\end{equation}
\end{theorem}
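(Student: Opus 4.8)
The plan is to exploit the explicit formula \eqref{eq:G(alpha)_irregular_CN_set} for $G(\alpha)$ and the symmetry hypothesis $G(M-\alpha)=G(\alpha)$, and to extract from this a functional condition that forces $M$ to be fixed by $\Gamma$. First I would observe that the hypothesis $G(M-\alpha)=G(\alpha)$ on the open interval $(0,1)$, combined with continuity of $G$ on $[0,M]$, should in particular be applied in a limiting sense near the endpoints; the natural thing is to take $\alpha \to 0^+$, so that $M-\alpha \to M^-$. The left-hand side $G(M-\alpha)$ then probes the behaviour of the spectral shape formula as its argument approaches $M$, which is exactly where $\sff^{-1}$ blows up (recall $\sff^{-1}:[0,M)\to\mathbb{R}^+$ with $\sff^{-1}(\alpha)\to+\infty$ as $\alpha\to M^-$, by Lemma~1). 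So the strategy is: write out both $G(\alpha)$ as $\alpha\to0$ and $G(M-\alpha)$ as $\alpha\to0$ using \eqref{eq:G(alpha)_irregular_CN_set}, equate the leading asymptotic terms, and see what constraint on $M$ emerges.

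Concretely, I would proceed as follows. Using property 2 of Lemma~1 ($\sff(0)=\sff'(0)=0$) together with the series expansion $\sff(z) = c_2 z^2 + O(z^3)$ for small $z$ (the quadratic leading term coming from the $r_t$-th order terms of the $A^{(t)}$, since $r_t\ge 2$), one gets $\sff^{-1}(\alpha) \sim \sqrt{\alpha/c_2}$ as $\alpha\to 0$, and then $G(\alpha)$ near $0$ behaves like $(1-q)h(\alpha)$ plus lower-order corrections — i.e. $G(\alpha)\to 0$ with a known $\alpha\log\alpha$ singularity. For the other side, I would set $\zeta = \sff^{-1}(M-\alpha)$, so that $\zeta\to+\infty$ as $\alpha\to 0$, and rewrite $G(M-\alpha)$ in terms of $\zeta$:
\begin{align*}
G(M-\alpha) = (1-q)h(M-\alpha) - q(M-\alpha)\log\zeta + q\Bigl(\int\!\rho\Bigr)\sum_{t\in I_c}\gamma_t\log A^{(t)}(\zeta).
\end{align*}
Since $A^{(t)}(\zeta)\sim A^{(t)}_{\bu_t}\zeta^{\bu_t}$ as $\zeta\to\infty$, the sum $q(\int\!\rho)\sum_t\gamma_t\log A^{(t)}(\zeta)$ grows like $q(\int\!\rho)(\sum_t\gamma_t\bu_t)\log\zeta = qM\log\zeta$ (using \eqref{eq:M_definition}), which precisely cancels the $-qM\log\zeta$ piece; what remains is $(1-q)h(M) + q\alpha\log\zeta + (\text{bounded})$. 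Matching the growth rates on the two sides — on the $G(\alpha)$ side the only unbounded contribution as $\alpha\to 0$ is through $h(\alpha)$ and $q\alpha\log\sff^{-1}(\alpha)$, all of which vanish — forces, after careful bookkeeping, a relation that pins down $M$. The cleanest route is probably to differentiate the identity $G(M-\alpha)=G(\alpha)$ and evaluate at a symmetric point, or to use the fixed-point structure directly: note that $\sff^{-1}(M-\alpha)=1/\sff^{-1}(\alpha)$ would be the symmetry of $\sff^{-1}$ from Lemma~\ref{lemma:finv_symmetry}, but here we are NOT assuming the WEFs are symmetric, so instead the weaker hypothesis on $G$ alone should yield the weaker conclusion that merely $M$ (not all of $(0,1)$) is a fixed point.

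The key algebraic step will be to compute, from \eqref{eq:Tanner_eq_y0} and \eqref{eq:Tanner_eq_x0} in the proof of Theorem~\ref{thm:main_result}, that the relation $y_0 = \alpha/((1-\alpha)\sff^{-1}(\alpha))$ together with the requirement coming from the symmetry — that the ``inverted'' solution at $M-\alpha$ be consistent with the solution at $\alpha$ — yields $\sff^{-1}(M) \cdot (\text{something}) = 1$ type identity. Plugging $\alpha\to$ the self-symmetric value and unwinding the substitutions $z_0=\sff^{-1}(\alpha)$, $y_0$ as above should give $\sff^{-1}\bigl((M/(2-M))^{(q-1)/q}\,\text{-type argument}\bigr)$ evaluated against $M/2$, which after rearrangement is exactly $\Gamma(M)=M$, i.e. $2\sff((M/(2-M))^{(q-1)/q}) = M$. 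The main obstacle I anticipate is the bookkeeping of the logarithmic singularities: one must verify that ALL the bounded/vanishing terms genuinely cancel and that the surviving equation is precisely $\Gamma(M)=M$ and not merely an inequality or a relation modulo lower-order terms. A secondary subtlety is justifying the interchange of the $\alpha\to 0$ limit with the sum over CN types and with the logarithms, which is routine given that $n_c$ is finite and each $A^{(t)}$ is a polynomial with nonnegative coefficients and constant term $1$.
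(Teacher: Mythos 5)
There is a genuine gap here: you name the correct idea only in passing (``differentiate the identity and evaluate at a symmetric point'') but never execute it, while the strategy you develop in detail --- matching the asymptotics of $G(\alpha)$ and $G(M-\alpha)$ as $\alpha\to 0^+$ --- cannot yield the stated conclusion. If you carry your endpoint analysis to completion, the $\log\zeta$ cancellation you correctly identify leaves the condition $(1-q)h(M)+q\left(\int\!\rho\right)\sum_{t}\gamma_t\log A^{(t)}_{\bu_t}=0$ (up to the bounded remainders), which is a \emph{different} necessary condition involving the top WEF coefficients $A^{(t)}_{\bu_t}$; it is not $\Gamma(M)=M$ and does not involve $\sff$ evaluated at any interior point. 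The fixed-point condition of the theorem lives at $\alpha=M/2$, not at the endpoints.

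The argument that actually works is: set $F(\alpha)=G(M-\alpha)-G(\alpha)\equiv 0$, hence $F'\equiv 0$; at $\alpha=M/2$ this gives $F'(M/2)=-2G'(M/2)=0$, i.e. $G'(M/2)=0$. The step missing from your proposal is the explicit computation of $G'$. Differentiating \eqref{eq:G(alpha)_irregular_CN_set} and using the identity $\left(\int\!\rho\right)\sum_{t\in I_c}\gamma_t \frac{\mathrm{d} A^{(t)}(z)}{\mathrm{d} z}\big/A^{(t)}(z)=\sff(z)/z=\alpha/\sff^{-1}(\alpha)$ at $z=\sff^{-1}(\alpha)$, the two terms containing $(\sff^{-1})'(\alpha)$ cancel exactly, leaving
\[
G'(\alpha)=(1-q)\log\frac{1-\alpha}{\alpha}-q\,\log\sff^{-1}(\alpha)\, ,
\]
and setting $\alpha=M/2$ yields $q\log\sff^{-1}(M/2)=(1-q)\log\frac{2-M}{M}$, i.e. $\sff^{-1}(M/2)=\bigl(M/(2-M)\bigr)^{(q-1)/q}$, which is exactly $\Gamma(M)=M$. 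Your ``key algebraic step'' paragraph --- promising an ``$\sff^{-1}(M)\cdot(\text{something})=1$ type identity'' from \eqref{eq:Tanner_eq_y0}--\eqref{eq:Tanner_eq_x0} --- is not a derivation, and without the cancellation above and the evaluation at the midpoint the claimed relation does not follow.
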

\begin{proof}
The proof is somewhat lengthy and we only sketch it due to space constraints. The first step consists of developing an expression for $F(\alpha)=G(M-\alpha)-G(\alpha)$. The obtained expression can be written as a function of $z$, denoted by $J(z)$, by defining $z=\sff^{-1}(\alpha)$. We must have $J(z)=0$ for all $z \in (0,+\infty)$, and therefore $J'(z)=0$ for all $z \in (0,+\infty)$. This latter condition must hold in particular for $z=\sff^{-1}(\frac{M}{2})$, in which case we obtain the compact condition
$$
q\, \log\sff^{-1}\left(\frac{M}{2}\right)=(1-q)\log\left(\frac{2-M}{M}\right)
$$
which is equivalent to
$$
M=2\,\, \sff\left( \left(\frac{M}{2-M}\right)^{\frac{q-1}{q}} \right)\, .
$$
\end{proof}

Note that if $A^{(t)}(z)$ is symmetric for each $t \in I_c$ then we have $M=1$, which is always a fixed point of $\Gamma(x)$ defined in \eqref{eq:Gamma} since it is possible to show that $\sff(1)=\frac{1}{2}$.


\section{Examples}
\begin{figure}[t]
\begin{center}
\psfragscanon
\psfrag{xaxis}[lt]{\scriptsize{$\alpha$}}
\psfrag{yaxis}[b]{\scriptsize{\textsf{Growth rate,} $G(\alpha)$}}
\includegraphics[%
  width=0.8\columnwidth,
  keepaspectratio]{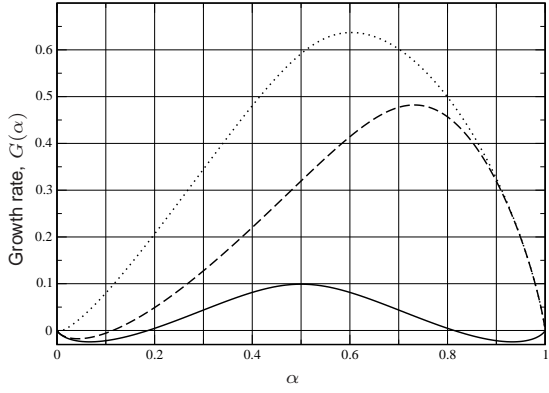}
\end{center}
\caption{Spectral shapes of the Tanner code ensemble in Example~\ref{example:1}. Solid: weight spectral shape (relative minimum distance: $\alpha^* = 0.18650$). Dashed: stopping set size spectral shape under MAP decoding at the CNs (relative minimum stopping set size: $\alpha_{\Phi}^* = 0.11414$). Dotted: stopping set size spectral shape under BD decoding at the CNs (relative minimum stopping set size: $\alpha_{\Psi}^* = 0.01025$).}
\label{fig:growth_rate_rep2_H7}
\end{figure}
\begin{figure}[t]
\begin{center}
\psfragscanon
\psfrag{xaxis}[lt]{\scriptsize{$\alpha$}}
\psfrag{yaxis}[b]{\scriptsize{\textsf{Growth rate,} $G(\alpha)$}}
\includegraphics[%
  width=0.8\columnwidth,
  keepaspectratio]{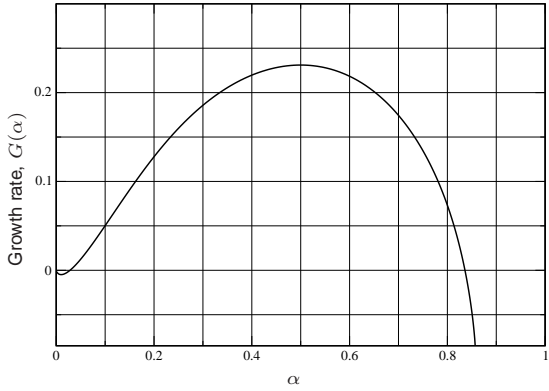}
\end{center}
\caption{Weight spectral shape of the check-hybrid GLDPC code ensemble in Example~\ref{example:2}. Relative minimum distance: $\alpha^* = 0.028179$.}
\label{fig:growth_rate_check_hybrid}
\end{figure}
\begin{figure}[t]
\begin{center}
\psfragscanon
\psfrag{xaxis}[lt]{\scriptsize{$\alpha$}}
\psfrag{yaxis}[b]{\scriptsize{\textsf{Growth rate,} $G(\alpha)$}}
\includegraphics[%
  width=0.8\columnwidth,
  keepaspectratio]{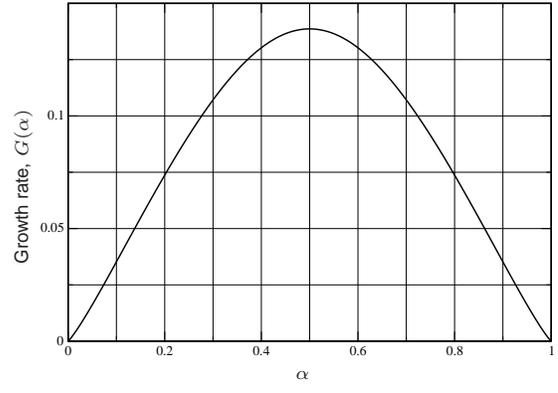}
\end{center}
\caption{Weight spectral shape of the Tanner code ensemble in Example~\ref{example:3}.}
\label{fig:growth_rate_asymp_bad}
\end{figure}
\begin{example}[Tanner code with $(7,4)$ Hamming CNs]\label{example:1} Consider a rate $R=1/7$
Tanner code ensemble where all VNs have degree $2$ and where all CNs are $(7,4)$ Hamming codes
(it was shown in \cite{lentmaier99:GLD,boutros99:GLD} that this ensemble has good spectral shape
behavior). The WEF of a Hamming $(7,4)$ CN is given by $A(z)=1 + 7 z^3 + 7 z^4 + z^7$, while its
local MAP-SSEF and BD-SSEF are given by $\Phi(z)=1+7 z^3+10 z^4+21 z^5+7 z^6+z^7$ and $\Psi(z)=1+35
z^3+35 z^4+21 z^5+7 z^6+z^7$ respectively. Note that we have $M=\frac{\bu}{s}=1$ in all three cases.
A plot of $G(\alpha)$, $G_{\Phi}(\alpha)$ and $G_{\Psi}(\alpha)$ obtained by implementation of
\eqref{eq:G(alpha)_tanner_codes} is depicted in Fig.~\ref{fig:growth_rate_rep2_H7}. We observe that
$A(z)$ satisfies the conditions of Theorem~\ref{theorem:G_symmetry_sufficient}. This is reflected by
the fact that the weight spectral shape $G(\alpha)$ is symmetric with respect to $\alpha=1/2$.
\end{example}
\begin{example}[Check-hybrid ensemble]\label{example:2} Consider a rate $R=1/3$ check-hybrid GLDPC code ensemble where all VNs are repetition codes of length $q=3$ and whose CN set is composed of a mixture of two linear block code types ($I_c=\{1,2\}$). CNs of type $1 \in I_c$ are length-$7$ SPC codes with WEF $A^{(1)}(z)=[(1+z)^7+(1-z)^7]/2$ and $\gamma_1=0.722$, while CNs of type $2 \in I_c$ are $(7,4)$ codes with WEF $A^{(2)}(z)=1+5 z^2+7 z^4+3 z^6$ and $\gamma_2=0.278$. The weight spectral shape of this ensemble, obtained from \eqref{eq:G(alpha)_irregular_CN_set}, is depicted in Fig.~\ref{fig:growth_rate_check_hybrid}. Note that for this ensemble, $M=6/7$. This value is not a fixed point of the function $F(x)$ defined in \eqref{eq:Gamma} (the only fixed point between $0$ and $1$ is $x=0.888421$). As expected, the weight spectral shape does not exhibit any symmetry property.
\end{example}
\begin{example}[Ensemble with bad spectral shape behavior]\label{example:3}
Consider a rate $R=1/5$ Tanner code ensemble where all VNs are repetition codes of length $q=2$ and
and where all CNs are $(5,3)$ linear block codes with WEF $A(z)=1+3 z^2+3 z^3+z^5$. This ensemble is
known to have bad spectral shape behavior ($\alpha^*=0$) since we have $\lambda'(0)C=6/5>1$, where
$\lambda(x)=x$ and $C=2 A_2 / s$ \cite{Tillich04:weight,paolini08:weight}. A plot of the weight
spectrum for this ensemble, obtained from \eqref{eq:G(alpha)_tanner_codes} is depicted in
Fig.~\ref{fig:growth_rate_asymp_bad}. We observe that the plot of $G(\alpha)$ is symmetric, due to
the fact that $A(z)$ is symmetric ($M=1$). As expected, the derivative of $G(\alpha)$ at $\alpha=0$
is positive and hence $\alpha^*=0$.
\end{example}


\section{Conclusion}
A simple expression has been developed for both the weight and the stopping set size spectral shape
of GLDPC code ensembles with a regular VN set and a hybrid CN set. Some known results (specifically,
an expression for the spectral shape of regular LDPC codes and an asymptotic expression of the
spectral shape for GLDPC codes as the normalized weight tends to zero) follow as corollaries of the
developed formula. Moreover, symmetry properties of the spectral shape function have been discussed.
A sufficient condition and a necessary condition for the weight spectral shape function to be
symmetric have been identified.
\appendices


\section{Asymptotic Case $\alpha \rightarrow 0$}\label{appendix:asymptotics}
For small $\alpha$, the expression \eqref{eq:G(alpha)_irregular_CN_set} reduces to a known formula first developed in \cite{Tillich04:weight} for Tanner codes, and extended in \cite{paolini08:weight} to irregular GLDPC codes. This formula is here obtained as a simple corollary of Theorem~\ref{thm:main_result}.
\begin{corollary}
In the limit where $\alpha \rightarrow 0$, the growth rate of the weight distribution of a GLDPC code ensemble with a regular VN set fulfills
\begin{align}\label{eq:G_asymptotic}
G(\alpha) \rightarrow & \left(q-\frac{q}{r}-1\right)\! \alpha\log\alpha +\! \frac{q\,\alpha}{r}\log \left(\! e r\! \int\! \rho \sum_{t \in X_c} \gamma_t A_{r}^{(t)} \right) \, . \notag \\
\end{align}
where $r$ denotes the smallest minimum distance of all CN types (i.e. $r \triangleq \min_{t \in I_c} r_t$), $X_c$ denotes the set of CN types with this minimum distance (i.e. $X_c = \{ t \in I_c \; : \; r_t = r \}$) and $e$ denotes Napier's number.
\end{corollary}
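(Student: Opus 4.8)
The plan is to obtain \eqref{eq:G_asymptotic} by performing an asymptotic expansion of \eqref{eq:G(alpha)_irregular_CN_set} as $\alpha \to 0$, with the key being to control the behavior of $\sff^{-1}(\alpha)$ as $\alpha \to 0$. First I would study $\sff$ near $z=0$. Writing $A^{(t)}(z) = 1 + A^{(t)}_{r_t} z^{r_t} + o(z^{r_t})$ and keeping only the dominant term (which comes from the CN types achieving the overall minimum distance $r$), one finds
\begin{align*}
\sff(z) = \left(\int\!\rho\right) \sum_{t\in I_c} \gamma_t \frac{z\, \frac{\mathrm{d} A^{(t)}(z)}{\mathrm{d} z}}{A^{(t)}(z)} \sim r \left(\int\!\rho\right)\sum_{t\in X_c}\gamma_t A^{(t)}_r\, z^r
\end{align*}
as $z\to 0^+$, since every term with $r_t > r$ contributes at higher order. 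Inverting this relation gives
\begin{align*}
\sff^{-1}(\alpha) \sim \left( \frac{\alpha}{r \left(\int\!\rho\right)\sum_{t\in X_c}\gamma_t A^{(t)}_r} \right)^{1/r}
\end{align*}
as $\alpha\to 0^+$, so that $z_0 \triangleq \sff^{-1}(\alpha)\to 0$ and $\log z_0 \sim \tfrac{1}{r}\log\alpha - \tfrac{1}{r}\log\!\big(r\int\!\rho\sum_{t\in X_c}\gamma_t A^{(t)}_r\big)$.

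Next I would substitute this into the three terms of \eqref{eq:G(alpha)_irregular_CN_set} and keep only contributions of order $\alpha\log\alpha$ and $\alpha$. For the entropy term, $h(\alpha) = -\alpha\log\alpha + \alpha + o(\alpha)$, contributing $(1-q)(-\alpha\log\alpha + \alpha)$. For the middle term, $-q\alpha\log\sff^{-1}(\alpha) \sim -\tfrac{q}{r}\alpha\log\alpha + \tfrac{q}{r}\alpha\log\!\big(r\int\!\rho\sum_{t\in X_c}\gamma_t A^{(t)}_r\big)$. For the last term, since $A^{(t)}(z_0) = 1 + O(z_0^{r})$ and $\log(1+x)\sim x$, one has $q\left(\int\!\rho\right)\sum_{t}\gamma_t\log A^{(t)}(z_0) \sim q\left(\int\!\rho\right)\sum_{t\in X_c}\gamma_t A^{(t)}_r z_0^{r} \sim \tfrac{q}{r}\alpha$ after using the asymptotic form of $z_0^r$. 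Collecting the $\alpha\log\alpha$ coefficients gives $(1-q)(-1) - \tfrac{q}{r} = q - 1 - \tfrac{q}{r}$, matching \eqref{eq:G_asymptotic}; collecting the $\alpha$ coefficients gives $(1-q) + \tfrac{q}{r}\log\!\big(r\int\!\rho\sum_{t\in X_c}\gamma_t A^{(t)}_r\big) + \tfrac{q}{r}$, and absorbing $(1-q)+\tfrac{q}{r} = \tfrac{q}{r} - (q-1)$ — more precisely rewriting $1 - q + \tfrac{q}{r} = \tfrac{q}{r}\big(1 + \tfrac{r(1-q)}{q}\big)$, hmm, better: write $(1-q) + \tfrac{q}{r} = \tfrac{q}{r}\log e + \tfrac{q}{r}\cdot\tfrac{r}{q}(1-q) = \tfrac{q}{r}\big[\log e + \tfrac{r(1-q)}{q}\big]$ — one sees that the $\alpha$-terms combine into $\tfrac{q\alpha}{r}\log\!\big(er\int\!\rho\sum_{t\in X_c}\gamma_t A^{(t)}_r\big)$ once the stray $(1-q)\alpha$ contribution is checked to cancel; I would verify this bookkeeping carefully against \eqref{eq:G_asymptotic}.

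The main obstacle I anticipate is the careful tracking of the $O(\alpha)$ (non-logarithmic) terms: the $(1-q)\alpha$ coming from $h(\alpha)$, the $\tfrac{q}{r}\alpha$ coming from the $\log A^{(t)}$ term, and the constant-times-$\alpha$ coming from expanding $\log\sff^{-1}(\alpha)$ must all be combined correctly, and one must be sure that the next-order corrections to $\sff^{-1}(\alpha)$ (the $o(z^r)$ remainder terms in $A^{(t)}$, and contributions from CN types with $r_t > r$) genuinely vanish in the limit rather than contributing at order $\alpha$. A clean way to handle this is to first show rigorously that $\sff(z)/z^r \to r\int\!\rho\sum_{t\in X_c}\gamma_t A^{(t)}_r$ and $\sff'(z)/z^{r-1}\to r^2\int\!\rho\sum_{t\in X_c}\gamma_t A^{(t)}_r$ as $z\to 0$, then argue every error term is $o(\alpha)$. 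Once the substitution is organized this way, the identification of the coefficient $q - 1 - q/r$ of $\alpha\log\alpha$ is immediate, and the coefficient of $\alpha$ reduces to recognizing that $er\int\!\rho\sum_{t\in X_c}\gamma_t A^{(t)}_r$ is exactly the argument of the logarithm after the $\log e$ absorbs the leftover linear terms.
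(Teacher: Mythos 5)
Your derivation follows the same route as the paper's proof: expand $\sff(z)\sim r\left(\int\!\rho\right)\sum_{t\in X_c}\gamma_t A_r^{(t)} z^r$ near $z=0$, invert to obtain the asymptotic form of $z_0=\sff^{-1}(\alpha)$, and substitute into \eqref{eq:G(alpha)_irregular_CN_set}, using $\log(1+x)\sim x$ to reduce the check-node term to $\alpha/(r\int\!\rho)$. All of that matches the paper step for step, including the observation that the $r_t>r$ types and the higher-order corrections to $\sff^{-1}$ contribute only at lower order.

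The one loose end you flag, however, does not resolve the way you hope: the $(1-q)\alpha$ contribution coming from $h(\alpha)=-\alpha\log\alpha+\alpha+O(\alpha^2)$ does \emph{not} cancel against anything. Carrying your own bookkeeping to completion gives
\begin{equation*}
G(\alpha)=\left(q-\tfrac{q}{r}-1\right)\alpha\log\alpha+\frac{q\,\alpha}{r}\log\!\left(e\,r\!\int\!\rho\sum_{t\in X_c}\gamma_t A_r^{(t)}\right)+(1-q)\,\alpha+o(\alpha)\,,
\end{equation*}
so the displayed formula captures the dominant $\alpha\log\alpha$ behavior but not the exact coefficient of $\alpha$. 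The paper sidesteps this by using the coarser approximation $h(\alpha)\to-\alpha\log\alpha$ from the outset, which is legitimate if the corollary is read as a leading-order asymptotic equivalence (every $O(\alpha)$ term is then subdominant to $\alpha\log\alpha$, so the stray term is immaterial). You should therefore either adopt that convention explicitly, or keep your finer expansion and retain the residual $(1-q)\alpha$ --- but in no case assert that it cancels, because it does not; absorbing it into the logarithm would change the argument from $e$ to $e^{1+r(1-q)/q}$.
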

\begin{proof}
Let $\alpha = \sff(z_0)$. From the definition of $\sff$ given in \eqref{eq:f}, it is readily shown that if $\alpha \rightarrow 0$, we must have $z_0 \rightarrow 0$. Next, note that in (\ref{eq:f}), each term $t \in I_c$ in the sum on the right-hand side is a rational polynomial in $z_0$ whose denominator tends to unity as $\alpha \rightarrow 0$, and whose numerator is dominated as $\alpha \rightarrow 0$ by the term corresponding to the lowest power of $z_0$. Therefore, as $\alpha \rightarrow 0$, (\ref{eq:f}) becomes
$\alpha \rightarrow \left( \int \rho \right) \sum_{t \in X_c} \gamma_t r A_r^{(t)} z_0^r$, or equivalently,
\begin{align}\label{eq:limit_z0}
z_0 \rightarrow \left( \frac{\alpha}{r \int\! \rho \sum_{t \in X_c} \gamma_t A_r^{(t)} }\right)^{1/r}\, .
\end{align}
Using $\log (1+x) \rightarrow x$ as $x \rightarrow 0$, we have that for every $t \in I_c$
$\log A^{(t)}(z_0) \rightarrow A_{r_t}^{(t)} z_0^{r_t}$ %
as $\alpha \rightarrow 0$, and so
\begin{equation}\label{eq:sum_log_A(z)_limit}
\sum_{t \in I_c} \gamma_t \log A^{(t)}(z_0) \rightarrow \sum_{t \in X_c} \gamma_t A_{r}^{(t)} z_0^{r} = \frac{\alpha}{r \int\! \rho} \; . 
\end{equation}

Note that in these steps we have again used the fact that a polynomial expression in $z_0$ is
dominated by its lowest degree term as $z_0 \rightarrow 0$. Next, observe that $h(\alpha)
\rightarrow - \alpha \log \alpha$ as \mbox{$\alpha \rightarrow 0$} and therefore we obtain (using
\eqref{eq:limit_z0} and \eqref{eq:sum_log_A(z)_limit} in~\eqref{eq:G(alpha)_irregular_CN_set})
\begin{align*}G(\alpha) \rightarrow & (q-1)\,\alpha \log\alpha - \frac{q\,\alpha}{r}\log \left(
\frac{\alpha}{r \int \rho \sum_{t \in X_c} \gamma_t A_{r}^{(t)}} \right)\\
\, & + \left( q \int\! \rho \right) \left( \frac{\alpha}{r \int\! \rho} \right)\end{align*} %
which coincides with \eqref{eq:G_asymptotic}.
\end{proof}


\section{Closed Form Expressions for the Growth Rate}\label{appendix:closed_form}
It is worthwhile to note that in some cases, \eqref{eq:G(alpha)_tanner_codes} can be expressed in
closed form because $\sff^{-1}(\alpha)$ can be expressed analytically. This is the case, for
instance, for the $(3,6)$ regular LDPC ensemble, for which $\sff(z) = \alpha$ becomes
$ax^3+bx^2+cx+d=0$, where $x=z^2$ and $(a,b,c,d) = (\alpha-1, 15\alpha-10,15\alpha-5,\alpha)$. This
cubic equation in $x$ may be solved by Cardano's method (see, e.g., \cite[p.
17]{wikipedia:Cardanos_method}; the discriminant $\Delta = \rho^3 + \mu^2$ is negative for every
$\alpha \in (0,1)$, where 
\[
\rho = \frac{3 a c - b^2}{9 a^2} \; ; \; \mu = \frac{9 a b c - 27a^2 d - 2 b^3}{54 a^3} \; .
\]
The required solution is then uniquely and analytically identified as $\sff^{-1}(\alpha) = z =
\sqrt{x}$ where $x = 2 \sqrt{-\rho} \cos \left( \theta/3 \right) - \frac{b}{3a} > 0$ and $\theta =
\tan^{-1} \left( \sqrt{-\Delta}/\mu \right)$.

\newpage
Similarly, the weight spectral shape of a $(4,8)$ regular LDPC ensemble may be expressed in
closed form through the solution of a quartic equation.


\bibliographystyle{IEEEtran}

\end{document}